\def\noheaderplainsetup{

\topmargin=0pt \headheight=0pt \headsep=0pt  \oddsidemargin=0pt \evensidemargin=0pt  \textheight=9.1truein \textwidth=6.5truein}   
\begin{document}

%    LOGICS:

\newcommand{\cltw}{\mbox{LKg}}

%     MISC.:

\newcommand{\code}[1]{\ulcorner #1 \urcorner}
\newcommand{\pintimpl}{\mbox{\hspace{2pt}\raisebox{0.033cm}{\tiny $>$}\hspace{-0.18cm} \raisebox{-0.043cm}{\large --}\hspace{2pt}}} % parallel-recurrence-based reduction
\newcommand{\st}{\mbox{\raisebox{-0.05cm}{$\circ$}\hspace{-0.13cm}\raisebox{0.16cm}{\tiny $\mid$}\hspace{2pt}}}  
\newcommand{\sti}{\mbox{\raisebox{-0.02cm}
{\scriptsize $\circ$}\hspace{-0.121cm}\raisebox{0.08cm}{\tiny $.$}\hspace{-0.079cm}\raisebox{0.10cm}
{\tiny $.$}\hspace{-0.079cm}\raisebox{0.12cm}{\tiny $.$}\hspace{-0.085cm}\raisebox{0.14cm}
{\tiny $.$}\hspace{-0.079cm}\raisebox{0.16cm}{\tiny $.$}\hspace{1pt}}}
\newcommand{\intimpl}{\mbox{\hspace{2pt}$\circ$\hspace{-0.14cm} \raisebox{-0.043cm}{\Large --}\hspace{2pt}}} % branching-recurrence-based reduction
\newcommand{\fintimpl}{\mbox{\hspace{2pt}$\bullet$\hspace{-0.14cm} \raisebox{-0.058cm}{\Large --}\hspace{-6pt}\raisebox{0.008cm}{\scriptsize $\wr$}\hspace{-1pt}\raisebox{0.008cm}{\scriptsize $\wr$}\hspace{4pt}}} % dfb-reduction

\newcommand{\adi}{\hspace{2pt}\raisebox{0.02cm}{\mbox{\small $\sqsupset$}}\hspace{2pt}} 
\newcommand{\plus}{\mbox{\hspace{1pt}\raisebox{0.05cm}{\tiny\boldmath $+$}\hspace{1pt}}}
\newcommand{\minus}{\mbox{\hspace{1pt}\raisebox{0.05cm}{\tiny\boldmath $-$}\hspace{1pt}}}
\newcommand{\mult}{\mbox{\hspace{1pt}\raisebox{0.05cm}{\tiny\boldmath $\times$}\hspace{1pt}}}
\newcommand{\equals}{\mbox{\hspace{1pt}\raisebox{0.05cm}{\tiny\boldmath $=$}\hspace{1pt}}}
\newcommand{\notequals}{\mbox{\hspace{1pt}\raisebox{0.05cm}{\tiny\boldmath $\not=$}\hspace{1pt}}}
\newcommand{\successor}{\mbox{\hspace{1pt}\boldmath $'$}}

\newcommand{\elz}[1]{\mbox{$\parallel\hspace{-3pt} #1 \hspace{-3pt}\parallel$}} 
\newcommand{\elzi}[1]{\mbox{\scriptsize $\parallel\hspace{-3pt} #1 \hspace{-3pt}\parallel$}}
\newcommand{\emptyrun}{\langle\rangle} 
\newcommand{\oo}{\bot}            
\newcommand{\pp}{\top}            
\newcommand{\xx}{\wp}               
\newcommand{\legal}[2]{\mbox{\bf Lr}^{#1}_{#2}} 
\newcommand{\win}[2]{\mbox{\bf Wn}^{#1}_{#2}} 
\newcommand{\seq}[1]{\langle #1 \rangle}           

%     OPERATORS:

\newcommand{\pst}{\mbox{\raisebox{-0.01cm}{\scriptsize $\wedge$}\hspace{-4pt}\raisebox{0.16cm}{\tiny $\mid$}\hspace{2pt}}}
\newcommand{\pcost}{\mbox{\raisebox{0.12cm}{\scriptsize $\vee$}\hspace{-4pt}\raisebox{0.02cm}{\tiny $\mid$}\hspace{2pt}}}

\newcommand{\gneg}{\mbox{\small $\neg$}}                  %game negation
\newcommand{\mli}{\hspace{2pt}\mbox{\small $\rightarrow$}\hspace{2pt}}                      %strong reduction
\newcommand{\cla}{\mbox{$\forall$}}      %blind universal quantifier
\newcommand{\cle}{\mbox{$\exists$}}        %blind existential quantifier
\newcommand{\mld}{\hspace{2pt}\mbox{\small $\vee$}\hspace{2pt}}     %multiplicative disjunction
\newcommand{\mlc}{\hspace{2pt}\mbox{\small $\wedge$}\hspace{2pt}}   %multiplicative conjunction
\newcommand{\mlci}{\hspace{2pt}\mbox{\footnotesize $\wedge$}\hspace{2pt}}   %multiplicative conjunction
\newcommand{\ade}{\mbox{\large $\sqcup$}}      %additive existential quantifier
\newcommand{\ada}{\mbox{\large $\sqcap$}}      %additive universal quantifier
\newcommand{\add}{\hspace{2pt}\mbox{\small $\sqcup$}\hspace{2pt}}                     %additive disjunction
\newcommand{\adc}{\hspace{2pt}\mbox{\small $\sqcap$}\hspace{2pt}} 
\newcommand{\adci}{\hspace{2pt}\mbox{\footnotesize $\sqcap$}\hspace{2pt}}              %index additive conjunction
\newcommand{\clai}{\forall}     %index blind universal quantifier
\newcommand{\clei}{\exists}        %index blind existential quantifier
\newcommand{\tlg}{\bot}               %classical \bot; trivially lost elementary game
\newcommand{\twg}{\top}               %classical \top; trivially won elementary game
\newcommand{\col}[1]{\mbox{$#1$:}}

%   NUMERATED ITEMS and ENVIRONMENTS

\newtheorem{theoremm}{Theorem}[section]
\newtheorem{factt}[theoremm]{Fact}
\newtheorem{corollaryy}[theoremm]{Corollary}
\newtheorem{definitionn}[theoremm]{Definition}
\newtheorem{thesiss}[theoremm]{Thesis}
\newtheorem{lemmaa}[theoremm]{Lemma}
\newtheorem{conventionn}[theoremm]{Convention}
\newtheorem{examplee}[theoremm]{Example}
\newtheorem{exercisee}[theoremm]{Exercise}
\newtheorem{remarkk}[theoremm]{Remark}
\newenvironment{definition}{\begin{definitionn} \em}{ \end{definitionn}}
\newenvironment{thesis}{\begin{thesiss} \em}{ \end{thesiss}}
\newenvironment{theorem}{\begin{theoremm}}{\end{theoremm}}
\newenvironment{lemma}{\begin{lemmaa}}{\end{lemmaa}}
\newenvironment{fact}{\begin{factt}}{\end{factt}}
\newenvironment{corollary}{\begin{corollaryy}}{\end{corollaryy}}
\newenvironment{convention}{\begin{conventionn} \em}{\end{conventionn}}
\newenvironment{example}{\begin{examplee} \em}{\end{examplee}}
\newenvironment{exercise}{\begin{exercisee} \em}{\end{exercisee}}
\newenvironment{remark}{\begin{remarkk} \em}{\end{remarkk}}
\newenvironment{proof}{ {\bf Proof.} }{\  \rule{2.5mm}{2.5mm} \vspace{.2in} }

\title{A Heuristic Proof Procedure for First-Order Logic}
\author{Keehang Kwon  \\ 
  {\small Department of Computing Sciences, DongA University, South Korea.
 khkwon@dau.ac.kr}}
\date{}
\maketitle

\begin{abstract}
Inspired by the efficient proof procedures discussed in {\em Computability logic} \cite{Jap03,Japic,Japfin}, we describe a heuristic proof procedure for
first-order logic. This is a variant of Gentzen sequent system and has the following features:
(a)~ it views sequents as games between the machine and the environment, and
(b)~ it views proofs as a winning strategy of the machine.

From this game-based viewpoint, a poweful heuristic can be extracted and
a fair  degree of determinism in proof search can be obtained.
This article proposes a new  deductive system \cltw\ with respect to 
first-order logic and proves its soundness and completeness.
\end{abstract}

\

\noindent {\em Keywords}: Proof procedures; Heuristics;   Game semantics;
Classical logics

%\tableofcontents

\section{Introduction}\label{intr}
%\marginpar{intr}

The Gentzen sequent system LK plays a key role in modern theorem proving.
Unfortunately, the LK system and its variants such as $focused$ LK
(as well as resolution and tableux
(see \cite{Reeves} for discussions))
are typically based on blind search and, therefore,
does not provide the best strategy if we want a short proof.

In this paper, inspired by the seminal
work of \cite{Jap03}, we present a variant of LK, called \cltw\ (g for game),  which yields a
proof in normal form with the following features:

\begin{itemize}

\item All the quantifier inferences are processed first.
This is achieved via deep inference.

\item If there are several quantifiers to resolve in the sequent,
we apply to sequents a technique  called {\it stability analysis},
a powerful heuristic  technique which greatly cuts down the search space
for finding a proof.

\end{itemize}

In essence, \cltw\ is a $game$-$viewed$ proof which 
captures  $game$-$playing$ nature  in proof search. It views 

\begin{enumerate}

\item sequents as games between the machine and the environment,

\item proofs as a winning strategy of the machine, and 

\item $\cla x F$  as the env's move and $\cle x F$ as the machine's move.

\end{enumerate}

At each stage, we construct a proof by the following rules:

\begin{enumerate}

\item If the sequent is stable, then it means that  the machine is the current winner.
      In this case, it requests the user to make a move.

\item If the sequent is instable, then it means that the environment is the current winner.
      In this case,  the machine makes a move.

\end{enumerate}
 
In this way, a fair (probably maximum) degree of determinism can be obtained from the \cltw\
proof system.

In this paper we present the proof procedure for first-order classical logic.
The remainder of this paper is structured as follows. We describe \cltw\ in
the next section. In Section \ref{examples}, we
present some examples of derivations. In Section 4, we prove the soundness
and completeness of \cltw.
Section 5 concludes the paper.

\section{The  logic $\cltw$}\label{ss6}
%\marginpar{ss6}

 The  formulas  are   the standard first-order classical formulas,
 with the features that
 (a) $\twg,\tlg$ are added, and (b) $\gneg$ is only allowed to be applied to atomic formulas.
Thus we assume that formulas are in negation normal form.

The  deductive system $\cltw$ below axiomatizes the set of valid formulas. 
$\cltw$ is a one-sided sequent calculus system, where 
a  sequent is a multiset of formulas.
Our presentation closely follows the one in \cite{Jap03}.

First, we need to define some terminology.

\begin{enumerate}
\item A {\bf surface occurrence} of a subformula is an occurrence that is 
not in the scope of any quantifiers ($\cla$ and/or $\cle$). 
\item  A sequent is {\bf propositional} iff all of its formulas are so. 
\item The {\bf propositionalization} $\elz{F}$ of a formula $F$ is the result of replacing
  in $F$ all $\cle$-subformulas by $\tlg$, and all  $\cla$-subformulas by $\twg$.
   The {\bf propositionalization} $\elz{F_1,\ldots,F_n}$ of a sequent 
$F_1,\ldots,F_n$ is the propositional formula $\elz{F_1}\mld\ldots\mld \elz{F_n}.$
\item A sequent  is said to be {\bf stable} iff its propositionalization is classically valid; otherwise it is {\bf unstable}.
  
\item The notation $F[E]$ repesents  a formula $F$ together with some
  surface occurrence of a subformula $E$. 
 
\end{enumerate}

\begin{center}
\begin{picture}(100,30)

\put(0,10){\bf THE RULES OF $\cltw$}

\end{picture}
\end{center}

$\cltw$ has the five rules listed below, with the following additional conditions: 
\begin{enumerate}
\item $X$:stable means that $X$ must meet the condition that it is
  stable. Similarly for $X$:unstable.
\item $\Gamma$ is a multiset of formulas and $F$ is a formula.
\item In $\cle$-Choose,  $t$ is  a closed term,  and $H(t)$ is the result of replacing by $t$ all free occurrences of $x$ in $H(x)$.

\end{enumerate}

\begin{center}
\begin{picture}(74,70)
\put(12,50){\bf  Fail}
\put(20,30){$\tlg$}
\put(0,22){\line(1,0){45}}
\put(55,20){( $X$ has no surface  occurrences of  $\cle x H(x)$)}
\put(8,8){$X$:unstable}
\end{picture}
\end{center}

\begin{center}
\begin{picture}(74,70)

\put(12,50){\bf $\cle$-Choose}
\put(8,30){$\Gamma,F[ H(t)]$}
\put(0,22){\line(1,0){78}}
\put(8,8){$\Gamma,F[\cle x H(x)]$:unstable}

\end{picture}
\end{center}

\begin{center}
\begin{picture}(74,70)

\put(12,50){\bf Replicate}
\put(8,8){$\Gamma,F[\cle x H(x)]$:unstable}
\put(0,22){\line(1,0){83}}
\put(0,30){$\Gamma,F[\cle x H(x)],F[\cle x H(x)]$}
\end{picture}
\end{center}

\begin{center}
\begin{picture}(70,70)
\put(12,50){\bf Succ}
\put(8,30){$\twg$}
\put(0,22){\line(1,0){45}}
\put(55,20){ ( $X$ has no surface occurrences of  $\cla xH(x)$)}
\put(8,8){$X$:stable}
\end{picture}
\end{center}

\begin{center}
\begin{picture}(74,70)
\put(20,50){\bf $\cla$-Choose }
\put(20,30){$\Gamma, F[H(\alpha)]$}
\put(0,22){\line(1,0){85}}
\put(85,20){   ($\alpha$ is a new constant)}
\put(20,8){$\Gamma, F[\cla xH(x)]$: stable}
\end{picture}
\end{center}

In the above,  the ``Replicate''  rule is an optimized version of
what is  known as Contraction, where contraction occurs only when
there is a surface occurrence of $\cle xH(x)$.   

A {\bf $\cltw$-proof} of a sequent $X$ is a sequence $X_1,\ldots,X_n$ of sequents, with $X_n=X$,
$X_1 = \twg$ such that, each $X_i$ follows  by one of the rules of $\cltw$ from $X_{i-1}$.

\section{Examples}\label{examples}

Below we describe some examples.

\begin{example}\label{j28b}
%\marginpar{j28b} 
The formula $\cla x\cle y\bigl(p(x)\mli p(y)\bigr)$ is provable in $\cltw$ as follows:\vspace{7pt}

\noindent 1. $\begin{array}{l}
p(\alpha)\mli p(\alpha)
\end{array}$  \ \ Succ\vspace{3pt}

\noindent 2. $\begin{array}{l}
\cle y\bigl(p(\alpha)\mli p(y)\bigr)
\end{array}$  \ \ $\cle$-Choose\vspace{3pt}

\noindent 3. $\begin{array}{l}
\cla x\cle y\bigl(p(x)\mli p(y)\bigr)
\end{array}$  \ \ $\cla$-Choose\vspace{7pt}

\end{example}

\begin{example}\label{j28b}
%\marginpar{j28b} 
The formula $\cle y\cla x\bigl(p(x)\mli p(y)\bigr)$ is provable in $\cltw$ as follows:\vspace{7pt}

\noindent 1. $\begin{array}{l}
\bigl(p(\alpha_1)\mli p(a)\bigr),\bigl(p(\alpha_2)\mli p(\alpha_1)\bigr)
\end{array}$  \ \ Succ\vspace{3pt}

\noindent 2. $\begin{array}{l}
\bigl(p(\alpha_1)\mli p(a)\bigr), \cla x\bigl(p(x)\mli p(\alpha_1)\bigr)
\end{array}$  \ \ $\cla$-Choose\vspace{3pt}

\noindent 3. $\begin{array}{l}
\bigl(p(\alpha_1)\mli p(a)\bigr), \cle y\cla x\bigl(p(x)\mli p(y)\bigr)
\end{array}$  \ \ $\cle$-Choose\vspace{3pt}

\noindent 4. $\begin{array}{l}
\cla x\bigl(p(x)\mli p(a)\bigr), \cle y\cla x\bigl(p(x)\mli p(y)\bigr)
\end{array}$  \ \ $\cla$-Choose\vspace{3pt}

\noindent 5. $\begin{array}{l}
\cle y\cla x\bigl(p(x)\mli p(y)\bigr), \cle y\cla x\bigl(p(x)\mli p(y)\bigr)
\end{array}$  \ \ $\cle$-Choose\vspace{3pt}

\noindent 6. $\begin{array}{l}
\cle y\cla x\bigl(p(x)\mli p(y)\bigr)
\end{array}$  \ \ Replicate\vspace{7pt}

On the other hand, the formula $ \bigl(\cle x p(x)\mli \cla y p(y)\bigr)$ which is invalid
can be seen to be unprovable.
This can be derived only by two $\cla$-Choose rules  and then the premise should be of the form
$\neg p(\alpha_1), p(\alpha_2)$ for some new constants $\alpha_1,\alpha_2$. The latter is not classically valid.
\end{example}

\section{The soundness and completeness of \cltw}\label{ssc}
%\marginpar{ssc}

We now present the soundness and completeness of \cltw.

\begin{theorem}\label{main}
  %\marginpar{main}
  
  \begin{enumerate}

  \item If  $\cltw$ terminates with success for $X$, then $X$ is valid.

  \item If  $\cltw$ terminates with failure  for $X$, then $X$ is invalid.

    \item If  $\cltw$ does not terminate  for $X$, then $X$ is invalid.

    \end{enumerate}
\end{theorem}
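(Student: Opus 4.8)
The plan is to derive all three claims from two results: \emph{soundness} (claim~1, that a successful run forces validity) and \emph{completeness} in the form ``$X$ valid $\Rightarrow$ $\cltw$ succeeds on $X$''. Claims 2 and 3 then follow immediately, since success, failure, and non-termination are mutually exclusive and jointly exhaustive outcomes: a failing run and a non-terminating run are both instances of ``$\cltw$ does not succeed'', so the contrapositive of completeness declares $X$ invalid in each case. Thus only two things must really be proved, and the genuine difficulty lies entirely in completeness.

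For soundness I would induct on the length of a $\cltw$-proof $X_1=\twg,\ldots,X_n=X$, showing each $X_i$ valid by checking that every rule carries a valid premise (the upper sequent) to a valid conclusion (the lower sequent). Reading a sequent as the disjunction of its formulas and recalling that formulas are in negation normal form (with $A\mli B$ read as $\gneg A\mld B$), every quantified subformula occurs \emph{positively}, hence monotonically, in its host. This makes three of the four checks routine: $\cle$-Choose uses $H(t)\models\cle x H(x)$ lifted through the positive context $F[\ ]$ and through $\Gamma$; $\cla$-Choose is ordinary universal generalization, legitimate because $\alpha$ is fresh; and Replicate is the idempotence $A\mld A\equiv A$. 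The one substantive step is \textbf{Succ}: since $X$ is stable with no surface $\cla$, its propositionalization $\elz{X}$ arises from $X$ by replacing the \emph{positive} surface $\cle$-subformulas with $\tlg$, and as $\tlg$ is the least formula, $\elz{X}\models X$; stability (validity of $\elz{X}$) therefore yields validity of $X$.

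For completeness I would fix a \emph{fair} search strategy and argue that it must reach \textbf{Succ} whenever $X$ is valid. The strategy: on a stable sequent bearing a surface $\cla x H(x)$, apply $\cla$-Choose with a fresh eigen-constant; on an unstable sequent bearing a surface $\cle x H(x)$, use \textbf{Replicate} to retain a spare copy and then $\cle$-Choose, scheduling instantiations so that each surface existential is eventually instantiated by every closed term of the growing Herbrand universe $\mathcal{H}$ (generated by the function symbols of $X$ and the eigen-constants). The role of fairness is precisely to repair the single non-invertible rule, $\cle$-Choose: a lone witness may be wrong, but replication lets the machine try them all. Under this schedule the sequence of sequents realizes the Herbrand expansion of $X$, and the stability test is exactly the test of classical propositional validity of the current expansion. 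By Herbrand's theorem, $X$ is valid iff some finite expansion is propositionally valid; the fair schedule reaches such an expansion in finitely many steps, where the stability oracle forces \textbf{Succ}. Conversely, suppose no finite expansion is propositionally valid. Then each finite expansion has a falsifying assignment, and since the expansions are nested the tree of falsifying assignments is infinite and finitely branching, so by K\"onig's lemma (propositional compactness) a single truth assignment over the ground atoms of $\mathcal{H}$ falsifies the whole expansion at once. Reading it as a Herbrand interpretation $M$ and proving a truth lemma by induction on formula complexity---surface $\cle$-subformulas false because fairness has supplied all witnesses, surface $\cla$-subformulas false through their eigen-constant witness, Boolean structure controlled by the monotone bound coming from stability---shows every formula of $X$ false in $M$, i.e.\ $X$ invalid. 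The terminal \textbf{Fail} case fits this pattern: there $X'$ is unstable with no surface $\cle$, so $\elz{X'}$ replaces only positive surface $\cla$-subformulas by $\twg$, giving $X'\models\elz{X'}$ and hence falsifiability that the same construction promotes to a countermodel of $X$.

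The hard part is this completeness direction. Three points will demand care: making the correspondence between $\cltw$'s \emph{deep-inference} quantifier moves and a Herbrand expansion exact for \emph{non-prenex} NNF formulas (where the notion of ``expansion'' must track surface occurrences rather than a prenex prefix); designing the fair schedule so the non-invertibility of $\cle$-Choose is genuinely neutralized by \textbf{Replicate} while the alternation with $\cla$-Choose does not destabilize already-stable sequents; and, in the non-terminating case, carrying out the K\"onig/compactness extraction together with the eigen-constant bookkeeping so that the single interpretation $M$ is coherent and the truth lemma goes through. By comparison, soundness and the local \textbf{Fail} analysis are straightforward consequences of positivity and the monotone behaviour of propositionalization.
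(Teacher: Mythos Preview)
Your soundness argument is essentially the paper's: both induct on the length of a successful derivation and check each rule, exploiting that in NNF every quantified subformula sits in a positive context (so $\cle$-Choose and \textbf{Succ} go through by monotonicity, \textbf{Replicate} by idempotence of disjunction, and $\cla$-Choose by generalization on a fresh constant).

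For completeness you take a genuinely different route. The paper keeps claims~2 and~3 separate. Claim~2 is argued by an induction on the length of a \emph{failing} derivation, showing rule by rule that invalidity of the premise forces invalidity of the conclusion (treating $\cla$-Choose, \textbf{Fail}, $\cle$-Choose, and \textbf{Replicate} in turn, and in the $\cle$-Choose case quantifying over all closed terms~$t$). Claim~3 is argued by contradiction: propositional compactness is applied to the infinite multiset accumulated along a non-terminating run to extract a finite valid sub-sequent that must already be present at some finite stage, where \textbf{Succ} would have fired. You instead collapse claims~2 and~3 into the single contrapositive ``valid $\Rightarrow$ the procedure succeeds'', supply an explicit fair scheduler, and invoke Herbrand's theorem directly (with K\"onig's lemma to build a single Herbrand countermodel when no finite expansion is propositionally valid). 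Your version is more explicit about the search strategy and about how \textbf{Replicate} plus fairness neutralize the non-invertibility of $\cle$-Choose; the paper's version stays closer to the syntactic shape of $\cltw$ and introduces no external scheduler, at the price of a looser treatment of the $\cle$-Choose case (where different witnesses~$t$ may be refuted by different interpretations). Both arguments ultimately rest on compactness in the infinite case.
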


\begin{proof}  Consider an arbitrary sequent $X$. 
\vspace{5pt}

{\em Soundness:} Induction on the length of derivatons.

{\em Case 1:} $X$ is derived from $Y$ by $\cle$-Choose.   By the induction hypothesis,
$Y$ is valid, which implies that  $X$ is valid.

{\em Case 2}:   $X$ is derived from $Y$ by Replicate. By the induction hypothesis,  $Y$ is
valid. Then, it is easy to see that $X$ is valid.

{\em Case 3}:  $X$ is derived from $Y$ by Succ.

  In this case, we know that there is no surface occurrences of $\cla x H(x)$ in $X$ and 
  $\elz{X}$ is classically valid. It is then
  easy to see that, reversing the propositionalization
  of $\elz{X}$ (replacing $\tlg$ by any formula of the form
  $F[\cle x H(x)]$)
  preserves  validity. For example, if $X$ is $p(a) \mli p(a), \cle x q(x)$,
  then $\elz{X}$ is valid and $X$ is valid as well.

  {\em Case 4}:  $X$ is derived from $Y$ by $\cla$-Choose.

  Thus, there is an occurrence of $\cla x H(x)$ in $X$.
The machine makes a move by picking up  some fresh constant $c$ not occurring in $X$.
Then, by the induction hypothesis, the premise is valid.
Now consider any interpretation $I$ that makes the premise true.
Then it is easy to see that the conclusion is true in $I$.
It is commonly known as ``generalization on constants''.

 \vspace{5pt}

{\em Completeness:}  Assume $\cltw$ terminates with failure. 
 
We proceed by induction on the length of derivations.

If $X$ is stable, then there should be a $\cltw$-unprovable sequent $Y$ with the following
condition.

 {\em Case 1: $\cla$-Choose:} $X$ has the form $\Gamma,F[\cla xG(x)]$, and
 $Y$ is $\Gamma,F[G(\alpha)]$, where $\alpha$ is a new constant not occurring in $X$.
 In this case, $Y$ is a $\cltw$-unprovable sequent, for otherwise $X$ is $\cltw$-provable.
 By the induction hypothesis, $Y$ is not true in some interpretation $I$. Then it is easy to
 see that $X$ is not true in  $I$. Therefore $X$ is not valid.

Next, we consider the cases when $X$ is not stable. Then there are three cases to consider.

{\em Case 2.1: Fail}: In this case,
there is no surface occurrence of $\cle x G(x)$ and the alorithm terminates with fauilure.
As $X$ is not stable, $\elz{X}$ is not classically valid. If we reverse the propositionalization
of $\elz{X}$ by replacing $\twg$ by any formula with some surface occurrence of $\cla G(x)$,
we observe that invalidity is preserved. Therefore, $X$ is not valid.

{\em Case 2.2: $\cle$-Choose}: In this case,  $X$ has the form $\Gamma,F[\cle x e G(x)]$
, and $Y(t)$ is $\Gamma,F[G(t)]$, where $t$ is a closed term.
In this case, $Y(t)$ is a $\cltw$-unprovable sequent for any $t$,
for otherwise $X$ is $\cltw$-provable. 
By the induction hypothesis, none of $Y(t)$ is  valid and thus none of $Y(t)$ is
not true in some interpretation $I$. Then it is easy to
 see that $X$ is not true in  $I$. Therefore $X$ is not valid.

 {\em Case 2.3: Replicate}: In this case,  $X$ has the form
 $\Gamma,F[\cle x  G(x)]$
, and $Y$ is $\Gamma,F[\cle x G(x)],F[\cle x  G(x)]$.
In this case, $Y$ is a $\cltw$-unprovable sequent,
for otherwise $X$ is $\cltw$-provable. 
By the induction hypothesis, $Y$ is  not valid and  is
not true in some interpretation $I$. Then it is easy to
 see that $X$ is not true in  $I$. Therefore $X$ is not valid.

 Now assume \cltw\ is not terminating, because Replicate occurs infinitely many times.
 We prove this by contradition.

 Assume that $X$ is valid but unprovable. Let $Z$ be an infinite multiset of propositional formulas
 obtained by applying infinite numbers of Replicate, together with $\cle$-Choose and $\cla$-Choose
 rules. 
 Then it is easy to see that $Z$ remains still valid but unprovable.
 By the compactness theorem on propositional logic, there is a finite subset $Z'$ of $Z$,  which is a valid sequent.
 Then there must be a step $t$ in the procedure such that, after $t$,  $Z'$ is derived.
 Then it is easy to see that  $Z'$ remains \cltw-unprovable. However, as $Z'$ is valid, it
 must be provable by Succ. This is a contradiction and, therefore,
 $X$ is not valid.

\end{proof}

\section{Some Optimizations}\label{ss6}
%\marginpar{ss6}

Although $\cltw$ performs well for valid sequents, it performs poorly
for invalid sequents. For example, it does not even terminate
for the invalid sequent $p(a), p(b) \mlc \cle x q(x)$.

For this reason, what we need is
a good heuristic for determining, in a simple yet effective way,
whether a sequent is invalid. In this section, we employ a simple
heuristic
called {\it maximum propositionalization} which replaces
in a sequent $X$ all $\cle$-subformulas by $\twg$. If $X'$ is obtained
from $X$ by maximum propositionalization, then it is easy to observe that
if $X'$ is invalid, then $X$ is invalid as well.

The  deductive system $\cltw'$ uses this heuristic. 
First, we need to define some terminology.

\begin{enumerate}

\item The {\bf max-propositionalization} $\elz{F}_{max}$ of a formula $F$ is the result of replacing
  in $F$ all $\cle$-subformulas by $\twg$, and all  $\cla$-subformulas by $\twg$. This process naturally extends to sequents.
  
\item  The {\bf min-propositionalization} $\elz{F}_{min}$ of a formula $F$ is the result of replacing
  in $F$ all $\cle$-subformulas by $\tlg$, and all  $\cla$-subformulas by $\tlg$. This process naturally extends to sequents.

 \item A sequent  is said to be {\bf max-p-invalid} iff its max-propositionalization is classically invalid.
   A sequent  is said to be {\bf min-p-valid} iff its min-propositionalization is classically valid.

\end{enumerate}

\begin{center}
\begin{picture}(100,30)

\put(0,10){\bf THE RULES OF $\cltw'$}

\end{picture}
\end{center}

Below, $X$:stable means that $X$ is stable but not min-p-valid.
  stable. Similarly  $X$:unstable means that $X$ is unstable
  but not min-p-valid.

  \begin{center}
\begin{picture}(74,70)
\put(12,50){\bf  Fail}
\put(20,30){$\tlg$}
\put(0,22){\line(1,0){45}}
%\put(55,20){( $X$ has no surface  occurrences of  $\cle x H(x)$)}
\put(8,8){$X$: max-p-invalid}
\end{picture}
\end{center}

\begin{center}
\begin{picture}(74,70)

\put(12,50){\bf $\cle$-Choose}
\put(8,30){$\Gamma,F[ H(t)]$}
\put(0,22){\line(1,0){78}}
\put(8,8){$\Gamma,F[\cle x H(x)]$:unstable}

\end{picture}
\end{center}

\begin{center}
\begin{picture}(74,70)

\put(12,50){\bf Replicate}
\put(8,8){$\Gamma,F[\cle x H(x)]$:unstable}
\put(0,22){\line(1,0){83}}
\put(0,30){$\Gamma,F[\cle x H(x)],F[\cle x H(x)]$}
\end{picture}
\end{center}

\begin{center}
\begin{picture}(70,70)
\put(12,50){\bf Succ}
\put(8,30){$\twg$}
\put(0,22){\line(1,0){45}}
%\put(55,20){ ( $X$ has no surface occurrences of  $\cla xH(x)$)}
\put(8,8){$X$: min-p-valid}
\end{picture}
\end{center}

\begin{center}
\begin{picture}(74,70)
\put(20,50){\bf $\cla$-Choose }
\put(20,30){$\Gamma, F[H(\alpha)]$}
\put(0,22){\line(1,0){85}}
\put(85,20){   ($\alpha$ is a new constant)}
\put(20,8){$\Gamma, F[\cla xH(x)]$: stable}
\end{picture}
\end{center}

The heuristic employed in $\cltw'$ is quite simple and needs to be
improved. For example, it does not apply well to  the invalid
sequent $p(a), \cle xp(x)$. It would be nice
to improve our heuristic so that it can apply to a wider class of
invalid sequents.

\end{document}